\documentclass[conference,letterpaper]{IEEEtran}

\usepackage[utf8]{inputenc} 
\usepackage[T1]{fontenc}
\usepackage{url}
\usepackage{ifthen}
\usepackage{cite}
\usepackage[hidelinks,breaklinks=true]{hyperref}
\usepackage[cmex10]{amsmath} % Use the [cmex10] option to ensure complicance
\usepackage{url}

\usepackage[utf8]{inputenc}
\usepackage[T1]{fontenc}
\usepackage{url}
\usepackage{ifthen}
\usepackage{cite}
\usepackage[cmex10]{amsmath}
\usepackage{amssymb,amsfonts}
\usepackage{graphicx}
\usepackage{textcomp}
\usepackage{xcolor}
\usepackage{makecell}
\usepackage{algorithmic}
\usepackage{comment}
\usepackage{tikz}
\newenvironment{remark}
{\par\noindent\textit{Remark.}\ }
{\par}
\usetikzlibrary{arrows.meta,shapes.geometric,arrows,positioning,calc}

\newtheorem{theorem}{Theorem}

\begin{document}
\title{Real-Time Reconstruction and Actuation Error Analysis for Markov Sources over MPR Channels}
% %%% Single author, or several authors with same affiliation:
% \author{%
%  \IEEEauthorblockN{Author 1 and Author 2}
% \IEEEauthorblockA{Department of Statistics and Data Science\\
%                    University 1\\
 %                   City 1\\
  %                  Email: author1@university1.edu}% }

%%% Several authors with up to three affiliations:
\author{%
\IEEEauthorblockN{Pansee S. Elessawy and Nikolaos Pappas}
\IEEEauthorblockA{Department of Computer and Information Science, 
Linköping University, Linköping, Sweden\\
Email: \{pansee.elessawy, nikolaos.pappas\}@liu.se}
}

\maketitle

\begin{abstract}

We study real-time reconstruction and actuation for two binary Markov sources that share a wireless multi-packet reception (MPR) channel. Each sensor follows a stationary randomized sampling policy, and the receiver maintains source estimates using the most recently decoded updates. We derive closed-form expressions for the steady-state real-time reconstruction error (RTE) and the cost of actuation error (CAE) as functions of the source transition probabilities and the effective update probabilities. We then characterize these update probabilities under randomized sampling, linking the physical-layer MPR model to task-oriented reconstruction and actuation metrics. Using these expressions, we formulate a sampling-constrained optimization problem with a weighted-error objective. The resulting analysis reveals how source dynamics, semantic weights, and 
MPR coupling affect the allocation of sampling resources.
Numerical results show that optimized randomized sampling outperforms random, greedy, and time-sharing baselines.
\end{abstract}

\section{Introduction}
\label{sec:introduction}

Networked autonomous systems, such as industrial automation, connected robotics, and cyber-physical control platforms, rely on timely and accurate
information exchange for monitoring and actuation
\cite{vitturi2013industrial,gielis2022critical}. In such systems, communication should be assessed not only by packet reliability or throughput, but also by its impact on the underlying task \cite{PappasKountouris2021ICAS, LuoDelfaniSalimnejadPappas2025,popovski2020semantic,CALVANESESTRINATI2021107930,9919752}.

The Age of Information (AoI) has been widely used to quantify the freshness of status updates at a remote destination \cite{KaulYatesGruteser2012,KostaPappasAngelakis2017,
YatesSunBrownKaulModianoUlukus2021,8000687}. However, freshness alone does not capture whether the receiver's estimate is correct, nor the cost incurred when actuation is based on an incorrect estimate. This has motivated beyond-AoI and goal-oriented
metrics, such as the Age of Incorrect Information (AoII) \cite{MaatoukKriouileAssaadEphremides2020,11071330,9921185,9162726,10620879}, the real-time reconstruction error (RTE), and the cost of actuation error (CAE) \cite{10409276,LuoDelfaniSalimnejadPappas2025}. These metrics evaluate information according to its usefulness for real-time reconstruction and actuation.

Goal-oriented sampling for Markov sources has been studied in several settings. Semantics-aware sampling for real-time tracking over an unreliable point-to-point channel was introduced in  \cite{PappasKountouris2021ICAS}, while CAE minimization under sampling constraints was considered in \cite{FountoulakisPappasKountouris2023}. Fundamental limits of remote estimation under communication constraints were studied in
\cite{ChakravortyMahajan2017,8812616}. In addition, goal-oriented estimation of multiple Markov sources in constrained networks was studied in \cite{LuoPappas2025,11450399,10547339} and pull-based remote tracking with correlated observations or partial observations in
~\cite{ZakeriMoltafetCodreanu2025,11195539,9676636}. 
AoI has been analyzed in multiple-access systems\cite{
8006544} with multi-packet reception (MPR) \cite{9365698}, and unreliable wireless networks with scheduling for weighted AoI minimization \cite{KadotaSinhaUysalSinghModiano2018,8734015}. Recent works have studied status updating over ALOHA-based random access channels \cite{9358219,10670063}. The works ~\cite{10949089,dlr224073} studied two-state Markov source monitoring including joint model estimation with unknown source statistics and analytical renewal--reward characterizations under random and reactive access policies.

In this paper, we study two independent binary Markov sources that share a wireless multiple-access channel with MPR. Unlike collision-based access, MPR allows the receiver to decode more than one simultaneous transmission with nonzero probability \cite{GhezVerduSchwartz1988}. The update probability of each source depends not only on its own sampling decisions, but also on the sampling decisions of the other sensor. 
\textit{While prior work has mainly studied goal-oriented sampling over point-to-point links or AoI in random access, the role of MPR in real-time reconstruction and actuation has not been thoroughly investigated}.

The main contributions of this paper are as follows. We derive closed-form expressions for the steady-state RTE of binary Markov sources under synchronize-or-hold estimation. We characterize the effective update probabilities induced by stationary randomized sampling over an MPR channel, thereby linking the MPR reception model to the reconstruction and actuation metrics. We obtain the CAE and show that, for binary sources, CAE minimization is equivalent to weighted RTE minimization with modified source weights. Finally, we formulate a sampling-constrained weighted-error minimization problem and study how source dynamics, semantic weights, and MPR coupling affect the allocation of sampling resources.
\section{System Model}
\label{sec:system_model}

We consider a time-slotted remote monitoring system with two sensors, two independent binary Markov sources, and a common receiver communicating over a shared wireless channel with multi-packet reception (MPR), as shown in Fig.~\ref{fig:one}. Time is indexed by $t\in\{0,1,2,\ldots\}$. The state of
source $i\in\{1,2\}$ at slot $t$ is denoted by $X_i(t)\in\{0,1\}$ and evolves according to the transition matrix
\begin{equation}
\mathbf{P}_i =
\begin{bmatrix}
1-\alpha_i & \alpha_i \\
\beta_i & 1-\beta_i
\end{bmatrix},
\qquad \alpha_i,\beta_i\in(0,1),
\label{eq:sys_Pi}
\end{equation}

where $\alpha_i$ and $\beta_i$ are the transition probabilities from $0$ to $1$ and from $1$ to $0$, respectively.

At each slot, sensor $k\in\{1,2\}$ selects an action
\begin{equation}
a_k(t)\in\{0,1,2\},
\label{eq:sys_action_space}
\end{equation}
where $a_k(t)=0$ denotes silence, while $a_k(t)=i$ means that sensor $k$ samples and transmits the current state of source $i$. We consider the stationary randomized sampling policy
\begin{equation}
\Pr\{a_k(t)=j\}=a_{k,j},\qquad j\in\{0,1,2\},
\label{eq:sys_random_policy_prob}
\end{equation}
where
\begin{equation}
a_{k,0}+a_{k,1}+a_{k,2}=1,\qquad a_{k,j}\geq 0.
\label{eq:sys_random_policy}
\end{equation}
The policy of sensor $k$ is denoted by $\mathbf{a}_k\triangleq(a_{k,0},a_{k,1},a_{k,2})$

and the joint action by $\mathbf{a}(t)\triangleq(a_1(t),a_2(t))$.

Each transmission occupies one slot. Let $Z_k(t)\in\{0,1\}$ denote the decoding outcome of sensor $k$, where $Z_k(t)=1$ if the packet from sensor $k$ is successfully decoded, otherwise the packet is discarded. 
The MPR channel is characterized by the success probabilities
$p_{1/1}$ and $p_{2/2}$ when sensors $1$ and $2$ transmit alone, respectively, and by $p_{1/1,2}$ and $p_{2/2,1}$ when both sensors transmit simultaneously. We have $p_{1/1}\geq p_{1/1,2}, p_{2/2}\geq p_{2/2,1}$.\footnote{The MPR success probabilities are used as a physical-layer abstraction of the wireless transmission process. In a specific wireless model, these probabilities can be related to channel parameters such as fading statistics, path loss, interference, and decoding threshold~\cite{NawareMergenTong2005}.} 

The receiver maintains an estimate $\hat X_i(t)\in\{0,1\}$ of each source. A successful update for source $i$ occurs if at least one successfully decoded packet carries $X_i(t)$. Thus,
\begin{equation}
\begin{aligned}
U_i(t)\triangleq
\mathbf{1}\Big\{&
(a_1(t)=i, Z_1(t)=1)
\ \text{or}\
(a_2(t)=i, Z_2(t)=1)
\Big\},\\
& i\in\{1,2\}.
\end{aligned}
\label{eq:sys_Ui}
\end{equation}
The receiver follows a synchronize-or-hold rule
\begin{equation}
\hat X_i(t)=
\begin{cases}
X_i(t), & U_i(t)=1,\\
\hat X_i(t-1), & U_i(t)=0,
\end{cases}
\qquad i\in\{1,2\}.
\label{eq:sys_sync_or_hold}
\end{equation}
The effective update probability of source $i$ is defined as
\begin{equation}
q_i\triangleq \Pr\{U_i(t)=1\},\qquad i\in\{1,2\}.
\label{eq:sys_qi_def}
\end{equation}

\begin{figure}[t]
\centering
\resizebox{0.88\linewidth}{!}{%
\begin{tikzpicture}[
    >=Stealth,
    font=\footnotesize,
    mc/.style={circle, draw, minimum size=0.42cm, inner sep=0pt, font=\scriptsize},
    sensor/.style={
        circle, draw, thick, fill=green!8,
        minimum size=1.15cm, align=center,
        inner sep=0pt, font=\scriptsize
    },
    mpr/.style={
        rectangle, draw, thick, fill=blue!5,
        rounded corners=2pt,
        minimum width=2.6cm, minimum height=0.95cm,
        align=center, font=\scriptsize
    },
    receiver/.style={
        circle, draw, thick,
        minimum size=1.15cm, align=center,
        inner sep=0pt, font=\scriptsize
    },
    estbox/.style={
        rectangle, draw, thick, fill=cyan!5,
        minimum width=1.7cm, minimum height=0.85cm,
        align=center, inner sep=1.5pt, font=\scriptsize
    },
    elab/.style={font=\scriptsize, inner sep=1pt}
]

% ============== GRAY SOURCE BOXES (drawn first, behind everything) ==============
\draw[thick, fill=gray!15, rounded corners=1.5pt]
    (-3.90, 4.35) rectangle (-0.25, 6.40);
\draw[thick, fill=gray!15, rounded corners=1.5pt]
    ( 0.25, 4.35) rectangle ( 3.90, 6.40);

% Anchor coordinates on box bottoms
\coordinate (box1S) at (-2.05, 4.35);
\coordinate (box2S) at ( 2.05, 4.35);

% ============== SOURCE 1 ==============
\node[elab] at (-2.05, 6.05) {\textbf{Source $X_1(t)$}};
\node[mc] (a0) at (-2.55, 5.20) {0};
\node[mc] (a1) at (-1.55, 5.20) {1};
\path[-{Stealth[scale=0.7]}]
    (a0) edge[loop left, looseness=6]
        node[elab, left] {$1{-}\alpha_1$} (a0)
    (a0) edge[bend left=22]
        node[elab, above, inner sep=1pt] {$\alpha_1$} (a1)
    (a1) edge[bend left=22]
        node[elab, below, inner sep=1pt] {$\beta_1$} (a0)
    (a1) edge[loop right, looseness=6]
        node[elab, right] {$1{-}\beta_1$} (a1);

% ============== SOURCE 2 ==============
\node[elab] at (2.05, 6.05) {\textbf{Source $X_2(t)$}};
\node[mc] (b0) at (1.55, 5.20) {0};
\node[mc] (b1) at (2.55, 5.20) {1};
\path[-{Stealth[scale=0.7]}]
    (b0) edge[loop left, looseness=6]
        node[elab, left] {$1{-}\alpha_2$} (b0)
    (b0) edge[bend left=22]
        node[elab, above, inner sep=1pt] {$\alpha_2$} (b1)
    (b1) edge[bend left=22]
        node[elab, below, inner sep=1pt] {$\beta_2$} (b0)
    (b1) edge[loop right, looseness=6]
        node[elab, right] {$1{-}\beta_2$} (b1);

% ============== SENSORS / MPR / RECEIVER / ESTIMATES ==============
\node[sensor] (sen1) at (-2.05, 2.85) {\textbf{Sensor 1}\\policy $\mathbf{a}_1$};
\node[sensor] (sen2) at ( 2.05, 2.85) {\textbf{Sensor 2}\\policy $\mathbf{a}_2$};

\node[mpr] (mpr) at (0, 1.20) {\textbf{MPR Channel}\\
    \scriptsize $p_{1/1},\, p_{2/2},\, p_{1/1,2},\, p_{2/2,1}$};

\node[receiver] (rec) at (0, -0.40) {\textbf{Receiver}\\$R$};

\node[estbox] (hat1) at (-1.5, -2.05) {$\hat{X}_1(t)$\\\scriptsize Estimate};
\node[estbox] (hat2) at ( 1.5, -2.05) {$\hat{X}_2(t)$\\\scriptsize Estimate};

% ============== ARROWS ==============
% Direct (own-source) sampling
\draw[->, thick] (box1S) -- (sen1.north)
    node[midway, left=1pt, font=\scriptsize] {$a_{1,1}$};
\draw[->, thick] (box2S) -- (sen2.north)
    node[midway, right=1pt, font=\scriptsize] {$a_{2,2}$};

% Cross sampling
\draw[->, thick]
    (-1.20, 4.35) .. controls +(0.6,-0.7) and +(-0.5,0.7) .. (sen2.north west)
    node[pos=0.1, below, sloped, font=\scriptsize] {$a_{1,2}$};
\draw[->, thick]
    ( 1.20, 4.35) .. controls +(-0.6,-0.7) and +(0.5,0.7) .. (sen1.north east)
    node[pos=0.1, below, sloped, font=\scriptsize] {$a_{2,1}$};

% Sensors -> MPR
\draw[->, thick] (sen1.south) -- ($(mpr.north)+(-0.05,0.01)$);
\draw[->, thick] (sen2.south) -- ($(mpr.north)+( 0.05,0.01)$);

% MPR -> Receiver
\draw[->, thick, dashed] (mpr.south) -- (rec.north);

% Receiver -> Estimates
%\draw[->, thick] (rec.south west) -- (hat1.north east)
%    node[midway, left=1pt, font=\scriptsize] {$U_1{=}1$};
%\draw[->, thick] (rec.south east) -- (hat2.north west)
%    node[midway, right=1pt, font=\scriptsize] {$U_2{=}1$};
\draw[->, thick] ([xshift=-1mm]rec.south) -- ([xshift= 1mm]hat1.north)
    node[midway, left=10pt, font=\scriptsize] {$U_1{=}1$};

\draw[->, thick] ([xshift= 1mm]rec.south) -- ([xshift=-1mm]hat2.north)
    node[midway, right=10pt, font=\scriptsize] {$U_2{=}1$};
\end{tikzpicture}%
}

\caption{The considered system model.}
\label{fig:one}
\end{figure}

We now derive the effective update probabilities defined in \eqref{eq:sys_qi_def}. Since the sensors select their actions independently under stationary randomized policies, each joint action occurs with probability equal to the product of the corresponding policy probabilities. The effective update probability of source $i$ is then obtained by summing over all joint actions in which at least one successfully decoded packet
contains $X_i$. For source~1, an update occurs in the following cases: sensor~1 transmits $X_1$ while sensor~2 is silent; sensor~2 transmits $X_1$ while sensor~1 is silent; both sensors transmit $X_1$; sensor~1 transmits $X_1$ while sensor~2 transmits $X_2$; or sensor~2 transmits $X_1$ while sensor~1 transmits $X_2$. When both sensors transmit $X_1$, the probability that at least one packet is decoded is $1-(1-p_{1/1,2})(1-p_{2/2,1})$. 

Thus for $j = 1 + (i \bmod 2), i\in\{1,2\}$ we have 

\begin{equation}
\begin{aligned}
%&j = 1 + (i \bmod 2), \qquad i\in\{1,2\},\\
&q_i
=
a_{1,i}a_{2,0}\,p_{1/1}
+
a_{1,0}a_{2,i}\,p_{2/2} \\
&\qquad
+
a_{1,i}a_{2,i}
\Big[1-(1-p_{1/1,2})(1-p_{2/2,1})\Big] \\
&\qquad
+
a_{1,i}a_{2,j}\,p_{1/1,2}
+
a_{1,j}a_{2,i}\,p_{2/2,1}.
\end{aligned}
\label{eq:q1_explicit}
\end{equation}

\section{Analysis} \label{sec:analysis}
\subsection{Real-Time Reconstruction Error (RTE)}
\label{subsec:rte}

We characterize the reconstruction error by studying the joint evolution of
the source $X_i(t)$ and its estimate $\hat X_i(t)$. The pair
$(X_i(t),\hat X_i(t))$ takes values in
\[
\mathcal{S}=\{(0,0),(0,1),(1,0),(1,1)\},
\]
with ordering $s_1=(0,0),\quad s_2=(0,1),\quad s_3=(1,0),\quad s_4=(1,1)$.

We assume that, within each slot, the source first evolves, and then a
successfully decoded update synchronizes the receiver with the current
source state; if no update is decoded, the receiver keeps its previous
estimate. Under this convention, the joint process
$(X_i(t),\hat X_i(t))$ is a four-state Markov chain with transition matrix

$\mathbf{T}_i$, where we define
$Y_i(t)\triangleq (X_i(t),\hat X_i(t))$. Then
\begin{equation}
T_i(k,\ell)
=
\Pr\{Y_i(t+1)=s_\ell \mid Y_i(t)=s_k\}.
\label{eq:T_entry}
\end{equation}

Using \eqref{eq:sys_Pi} and \eqref{eq:sys_sync_or_hold}, direct enumeration
gives

\begin{equation}
\resizebox{0.96\columnwidth}{!}{$
\mathbf{T}_i =
\begin{bmatrix}
1-\alpha_i & 0 & \alpha_i(1-q_i) & \alpha_i q_i \\
q_i(1-\alpha_i) & (1-\alpha_i)(1-q_i) & 0 & \alpha_i \\
\beta_i & 0 & (1-\beta_i)(1-q_i) & q_i(1-\beta_i) \\
\beta_i q_i & \beta_i(1-q_i) & 0 & 1-\beta_i
\end{bmatrix}
$}
\label{eq:T_matrix}
\end{equation}

For example, from state $(0,0)$, the process moves to $(1,0)$ with
probability $\alpha_i(1-q_i)$ when the source changes but no update is
received, and to $(1,1)$ with probability $\alpha_i q_i$ when the source
change is followed by a successful update.

Let $\boldsymbol{\pi}_i
=
\big[
\pi_i(0,0),\,
\pi_i(0,1),\,
\pi_i(1,0),\,
\pi_i(1,1)
\big]$
denote the stationary distribution of the joint chain. Then
\begin{equation}
\boldsymbol{\pi}_i = \boldsymbol{\pi}_i \mathbf{T}_i,
\qquad
\sum_{(x,\hat x)\in\mathcal{S}} \pi_i(x,\hat x) = 1.
\label{eq:stationary_equations}
\end{equation}
Following \cite{10409276}, the instantaneous reconstruction error is
\begin{equation}
E_i(t) \triangleq \mathbf{1}\{X_i(t) \neq \hat{X}_i(t)\}.
\label{eq:reconstruction_error}
\end{equation}
Thus, the steady-state real-time reconstruction error (RTE) is
\begin{equation}
E_i
\triangleq
\Pr\big(X_i(t)\neq \hat X_i(t)\big)
=
\pi_i(0,1) + \pi_i(1,0).
\label{eq:Ei_def}
\end{equation}
Solving \eqref{eq:stationary_equations} and substituting into
\eqref{eq:Ei_def} yields
\begin{equation}
E_i
=
\frac{2\alpha_i\beta_i(1-q_i)}
{(\alpha_i+\beta_i)\big[(\alpha_i+\beta_i) - q_i(\alpha_i+\beta_i-1)\big]}.
\label{eq:Ei_closed_form}
\end{equation}
The detailed derivation is omitted due to space limitations but is provided in the online appendix~\ref{app:Ei_derivation}.
Equivalently, defining $\lambda_i\triangleq 1-\alpha_i-\beta_i$, we obtain
\begin{equation}
E_i
=
\frac{2\alpha_i\beta_i(1-q_i)}
{(1-\lambda_i)\big(1-\lambda_i(1-q_i)\big)}.
\label{eq:Ei_closed_form_lambda}
\end{equation}

For the two-source system, we consider the weighted reconstruction error
\begin{equation}
E
= w_1 E_1 + w_2 E_2,
\qquad w_1,w_2\ge 0,
\label{eq:weighted_total_error}
\end{equation}
where $w_i$ captures the importance of source $i$.
\begin{remark}

The expression in \eqref{eq:Ei_closed_form} is stated for
$q_i\in(0,1]$. For $0<q_i<1$, the joint Markov chain in
\eqref{eq:T_matrix} is irreducible. If $q_i=1$, then $E_i=0$, since the
receiver is synchronized in every slot. As $q_i\rightarrow 0$,
\[
E_i
\xrightarrow[q_i\rightarrow 0]{}
\frac{2\alpha_i\beta_i}{(\alpha_i+\beta_i)^2},
\]
which is the mismatch probability between two independent samples drawn
from the stationary distribution of $X_i(t)$. When $q_i=0$, no update
is ever received, so $\hat X_i(t)$ remains fixed at its initial value and
the joint chain is not irreducible. The parameter
$\lambda_i=1-\alpha_i-\beta_i$ captures the temporal correlation of the
source: $\lambda_i>0$ corresponds to positively correlated dynamics,
$\lambda_i=0$ to the memoryless case, and $\lambda_i<0$ to negatively
correlated dynamics. 

Fig.~\ref{fig:RCE_alpha_beta} illustrates the effect of
$(\alpha_i,\beta_i)$ on the RTE.

\end{remark}

\begin{figure}[t]
\centering
\includegraphics[width=0.65\linewidth]{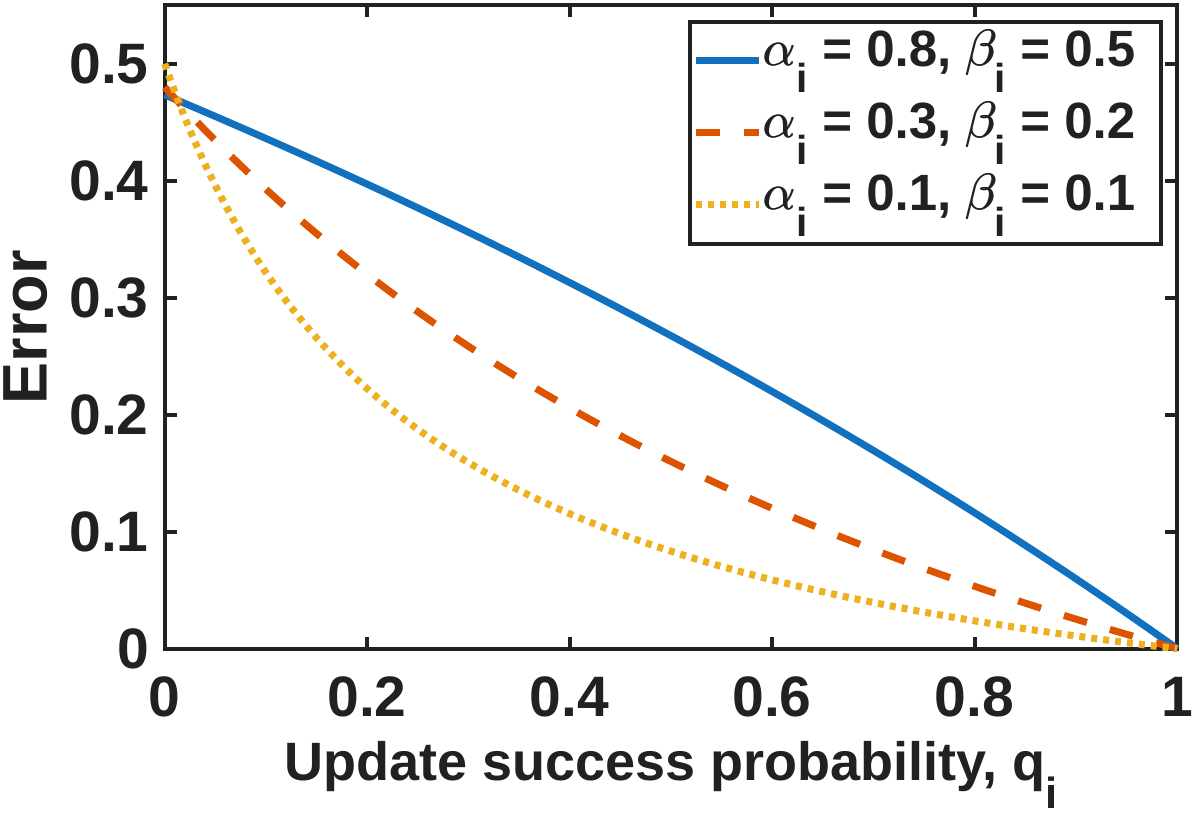}
\caption{RTE for different
$(\alpha_i,\beta_i)$ as a function of $q_i$.}
\label{fig:RCE_alpha_beta}
\end{figure}

\subsection{Cost of Actuation Error}
\label{subsec:cae}

We next assign costs to the erroneous reconstruction states. Following
\cite{10409276}, the generic cost of actuation error (CAE) is
\begin{equation}
\bar{C}_{i}
\triangleq
\sum_{x \neq \hat{x}} C_i^{x,\hat{x}}\,\pi_i(x,\hat{x}),
\label{eq:CAE_generic}
\end{equation}
where $C_i^{x,\hat{x}}$ is the cost incurred for source $i$ when the true
state is $x$ while the reconstructed state is $\hat{x}\neq x$.

Let $C_i^{0,1}$ and $C_i^{1,0}$ denote the costs of the
errors $(X_i,\hat X_i)=(0,1)$ and $(1,0)$, respectively. Then
\begin{equation}
\overline{\mathcal C}_i
=
C_i^{0,1}\,\pi_i(0,1)
+
C_i^{1,0}\,\pi_i(1,0).
\label{eq:CAE_per_source}
\end{equation}
For $q_i>0$, the stationary balance equations of the four-state chain imply
\begin{equation}
\zeta_i \triangleq \pi_i(0,1)=\pi_i(1,0).
\label{eq:zeta_def}
\end{equation}
Therefore,
\begin{equation}
E_i=\pi_i(0,1)+\pi_i(1,0)=2\zeta_i,
\qquad
\zeta_i=\frac{E_i}{2}.
\label{eq:Ei_zeta_relation}
\end{equation}
Using \eqref{eq:Ei_closed_form}, we obtain
\begin{equation}
\zeta_i
=
\frac{\alpha_i\beta_i(1-q_i)}
{(\alpha_i+\beta_i)\big[(\alpha_i+\beta_i)-q_i(\alpha_i+\beta_i-1)\big]},
\label{eq:zeta_closed_form}
\end{equation}
or equivalently, with $\lambda_i\triangleq1-\alpha_i-\beta_i$,
\begin{equation}
\zeta_i
=
\frac{\alpha_i\beta_i(1-q_i)}
{(1-\lambda_i)\big(1-\lambda_i(1-q_i)\big)}.
\label{eq:zeta_closed_form_lambda}
\end{equation}
Substituting \eqref{eq:zeta_def} into \eqref{eq:CAE_per_source} gives
\begin{equation}
\overline{\mathcal C}_i
=
\big(C_i^{0,1}+C_i^{1,0}\big)\zeta_i.
\label{eq:CAE_closed_form}
\end{equation}
Hence,
\begin{equation}
\overline{\mathcal C}_i
=
\frac{\big(C_i^{0,1}+C_i^{1,0}\big)\alpha_i\beta_i(1-q_i)}
{(1-\lambda_i)\big(1-\lambda_i(1-q_i)\big)}.
\label{eq:CAE_closed_form_lambda}
\end{equation}
The weighted total CAE is
\begin{equation}
\overline{\mathcal C}
=
w_1\,\overline{\mathcal C}_1
+
w_2\,\overline{\mathcal C}_2,
\qquad w_1,w_2\ge 0.
\label{eq:Ctot}
\end{equation}

\begin{remark}
Although the directional costs $C_i^{0,1}$ and $C_i^{1,0}$ may be different, the two mismatch states have equal steady-state probabilities under the considered binary Markov model and synchronize-or-hold estimator. Consequently,
\begin{equation}
\overline{\mathcal C}_i
=
\big(C_i^{0,1}+C_i^{1,0}\big)\zeta_i
=
\frac{C_i^{0,1}+C_i^{1,0}}{2}E_i.
\label{eq:cae_scaled_rte}
\end{equation}
Thus, for each binary source under the stationary randomized policies considered here, the CAE is proportional to the corresponding RTE, with proportionality factor $(C_i^{0,1}+C_i^{1,0})/2$. Consequently, weighted CAE minimization is equivalent to weighted RTE minimization with modified source weights. This reduction does not rely on symmetric actuation costs; it follows from the equality of the two steady-state mismatch probabilities. 

For multi-state Markov sources, or for more general dynamic policies whose update decisions depend on the current source state or on the particular mismatch state $(X_i,\hat X_i)$, this simplification does not generally hold, since different mismatch states may have different stationary probabilities.
\end{remark}

\section{Optimization under Sampling Constraints}
\label{sec:optimization}

We optimize the stationary randomized sampling policy with respect to
the weighted reconstruction objective in \eqref{eq:weighted_total_error}.
For each sensor $k\in\{1,2\}$, let $\Gamma_k\in(0,1]$ denote the maximum
allowable sampling rate, i.e., the maximum probability with which sensor
$k$ is allowed to transmit in a slot. Since transmission occurs
whenever action $1$ or $2$ is selected, the sampling-budget constraint is
$a_{k,1}+a_{k,2}\le \Gamma_k, k\in\{1,2\}$.
The constrained weighted RTE minimization problem is formulated as\footnote{For the binary-source model, the weighted CAE minimization
problem has the same feasible set and is equivalent to weighted RTE
minimization with modified source weights
\[
\tilde w_i
=
w_i\frac{C_i^{0,1}+C_i^{1,0}}{2},
\qquad i\in\{1,2\}.
\]
Thus, the results apply to CAE minimization after
this weight transformation.}
\begin{align}
\min_{\{a_{k,j}\}} \quad & E
\label{eq:rte_optimization}\\
\text{s.t.}\quad
& a_{k,j} \ge 0, \qquad k\in\{1,2\},\; j\in\{0,1,2\}, \label{eq:opt_nonneg}\\
& a_{k,0}+a_{k,1}+a_{k,2} = 1, \qquad k\in\{1,2\}, \label{eq:opt_simplex}\\
& a_{k,1}+a_{k,2} \le \Gamma_k, \qquad k\in\{1,2\}. \label{eq:opt_budget}
\end{align}

Both objectives depend on the policy through the effective update
probabilities $q_i(\mathbf a_1,\mathbf a_2)$, given by
\eqref{eq:q1_explicit}. Since these expressions
contain bilinear terms of the form $a_{1,j}a_{2,j'}$, the optimization
problem in \eqref{eq:rte_optimization} is, in general, nonconvex.

We now derive a structural result for the sampling-constrained problem in
\eqref{eq:rte_optimization}--\eqref{eq:opt_budget}. For each sensor
$k\in\{1,2\}$, the feasible policy set is

\begin{equation}
\begin{aligned}
\mathcal P_k(\Gamma_k)
\triangleq
\bigl\{\mathbf a_k\in\mathbb R_{\ge0}^3:\;&
a_{k,0}+a_{k,1}+a_{k,2}=1,\\
&
a_{k,1}+a_{k,2}\le \Gamma_k
\bigr\}.
\end{aligned}
\label{eq:Pk_gamma}
\end{equation}

Equivalently, since $a_{k,0}=1-a_{k,1}-a_{k,2}$, the feasible set in the
$(a_{k,1},a_{k,2})$ plane is the triangular region $a_{k,1}\ge0, a_{k,2}\ge0, a_{k,1}+a_{k,2}\le \Gamma_k$.
Then, the extreme points of $\mathcal P_k(\Gamma_k)$ are
\begin{equation}
\mathcal V_k
=
\left\{
(1,0,0),\;
(1-\Gamma_k,\Gamma_k,0),\;
(1-\Gamma_k,0,\Gamma_k)
\right\}.
\label{eq:Vk_gamma}
\end{equation}
The constrained feasible set is $\mathcal F_\Gamma
=
\mathcal P_1(\Gamma_1)\times \mathcal P_2(\Gamma_2).$

From \eqref{eq:q1_explicit}, each
$q_i(\mathbf a_1,\mathbf a_2)$ is bilinear in the two policy blocks.
Hence, for fixed $\mathbf a_j$, it is affine in $\mathbf a_i$. When
\begin{equation}
\lambda_1\le 0,\qquad \lambda_2\le 0,
\label{eq:lambda_nonpositive_constrained}
\end{equation}
 
it is straightforward to verify that $E_i(q_i)$ is concave in $q_i$.
Thus, the mappings
\begin{equation}
\mathbf a_1 \mapsto E_i\big(q_i(\mathbf a_1,\mathbf a_2)\big),
\qquad
\mathbf a_2 \mapsto E_i\big(q_i(\mathbf a_1,\mathbf a_2)\big)
\label{eq:block_concavity_components_constrained}
\end{equation}
are concave in each block. Consequently,
\begin{equation}
E(\mathbf a_1,\mathbf a_2)
=
w_1E_1\big(q_1(\mathbf a_1,\mathbf a_2)\big)
+
w_2E_2\big(q_2(\mathbf a_1,\mathbf a_2)\big)
\label{eq:Etot_special_constrained}
\end{equation}
is concave in $\mathbf a_1$ for fixed $\mathbf a_2$, and concave in
$\mathbf a_2$ for fixed $\mathbf a_1$.

\begin{theorem}
\label{thm:caseA_constrained}
If $\lambda_1\le 0$ and $\lambda_2\le 0$, then at least one global minimizer
of the sampling-constrained RTE problem over $\mathcal F_\Gamma$ is
attained at a pair of constrained vertices. Equivalently,
\begin{equation}
E^\star
=
\min_{\mathbf v_1\in\mathcal V_1,\;\mathbf v_2\in\mathcal V_2}
E(\mathbf v_1,\mathbf v_2).
\label{eq:Etot_vertex_constrained}
\end{equation}
Thus, a global optimum can be found by evaluating only nine candidate
policy pairs.
\end{theorem}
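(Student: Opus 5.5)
The plan is a two-step vertex-reduction argument that exploits the block-wise concavity established just before the theorem. The problem minimizes a continuous function over the compact set $\mathcal F_\Gamma=\mathcal P_1(\Gamma_1)\times\mathcal P_2(\Gamma_2)$, so a global minimizer $(\mathbf a_1^\star,\mathbf a_2^\star)$ exists. The one caveat is $q_i=0$, where the closed form \eqref{eq:Ei_closed_form} is understood by its continuous extension $2\alpha_i\beta_i/(\alpha_i+\beta_i)^2$ from the Remark. The goal is to replace $\mathbf a_1^\star$ and then $\mathbf a_2^\star$ by vertices without increasing $E$.

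\emph{Step 1: verify concavity in $q_i$.} I would first confirm the claim that $E_i(q_i)$ is concave on $[0,1]$ when $\lambda_i\le 0$. Write $u=1-q_i\in[0,1]$, so that
\[
E_i=\frac{2\alpha_i\beta_i}{1-\lambda_i}\cdot\frac{u}{1-\lambda_i u}.
\]
Set $g(u)=u/(1-\lambda_i u)$. Then $g''(u)=2\lambda_i/(1-\lambda_i u)^3$. The denominator $1-\lambda_i u$ is positive, because $\lambda_i\le 0$ gives $1-\lambda_i u\ge 1$. Hence $g''(u)\le 0$. Since $u$ is affine in $q_i$ and the prefactor $2\alpha_i\beta_i/(1-\lambda_i)$ is positive, $E_i$ is concave in $q_i$. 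By \eqref{eq:q1_explicit}, $q_i$ is affine in $\mathbf a_1$ for fixed $\mathbf a_2$. A concave function composed with an affine map is concave, and a nonnegative combination of concave functions is concave. Therefore $\mathbf a_1\mapsto E(\mathbf a_1,\mathbf a_2)$ is concave on $\mathcal P_1(\Gamma_1)$, and symmetrically in $\mathbf a_2$.

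\emph{Step 2: move each block to a vertex.} Fix $\mathbf a_2=\mathbf a_2^\star$. The set $\mathcal P_1(\Gamma_1)$ is a triangle with vertex set $\mathcal V_1$, so write $\mathbf a_1^\star=\sum_{\mathbf v\in\mathcal V_1}\theta_{\mathbf v}\mathbf v$ as a convex combination. Concavity gives
\[
E(\mathbf a_1^\star,\mathbf a_2^\star)\ge\sum_{\mathbf v}\theta_{\mathbf v}E(\mathbf v,\mathbf a_2^\star)\ge\min_{\mathbf v\in\mathcal V_1}E(\mathbf v,\mathbf a_2^\star).
\]
So some $\mathbf v_1\in\mathcal V_1$ satisfies $E(\mathbf v_1,\mathbf a_2^\star)\le E^\star$, and optimality of $E^\star$ forces equality. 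Now fix $\mathbf a_1=\mathbf v_1$ and apply the same argument in the second block, using concavity in $\mathbf a_2$ over $\mathcal P_2(\Gamma_2)$. This yields $\mathbf v_2\in\mathcal V_2$ with $E(\mathbf v_1,\mathbf v_2)=E^\star$.

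Since every vertex pair is feasible, the minimum over $\mathcal V_1\times\mathcal V_2$ is also at least $E^\star$. This proves \eqref{eq:Etot_vertex_constrained}. Because $|\mathcal V_1\times\mathcal V_2|=9$, a global optimum is found by evaluating nine candidate policy pairs.

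The main obstacle is the boundary behaviour of Step 1. Concavity is only needed on $[0,1]$, but $E_i$ must remain well defined and continuous there, including at $q_i=0$, which a vertex pair such as $((1,0,0),(1,0,0))$ produces. I would handle this by invoking the Remark's limit, which coincides with evaluating the rational formula at $q_i=0$. The denominator $(1-\lambda_i)(1-\lambda_i u)$ is bounded away from zero, so the formula is smooth on a neighbourhood of $[0,1]$ and the concavity argument applies without modification.
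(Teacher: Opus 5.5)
Your proof is correct and follows essentially the same route as the paper. Block-wise concavity of $E$, coming from concavity of $E_i$ in $q_i$ for $\lambda_i\le 0$ composed with the affine dependence of $q_i$ on each policy block, lets you move first $\mathbf a_1$ and then $\mathbf a_2$ to a vertex without increasing the objective. The differences are only presentational: you start from an existing global minimizer and apply Jensen's inequality explicitly instead of interchanging iterated minima, and you supply the second-derivative check and the $q_i=0$ continuity detail that the paper leaves as ``straightforward.''
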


\begin{IEEEproof}
Fix $\mathbf a_2\in\mathcal P_2(\Gamma_2)$. Since
$E(\mathbf a_1,\mathbf a_2)$ is concave in $\mathbf a_1$, it admits at
least one minimizer over the compact convex polytope
$\mathcal P_1(\Gamma_1)$ at an extreme point. Hence,
\begin{equation}
\min_{\mathbf a_1\in\mathcal P_1(\Gamma_1)}
E(\mathbf a_1,\mathbf a_2)
=
\min_{\mathbf v_1\in\mathcal V_1}
E(\mathbf v_1,\mathbf a_2).
\label{eq:first_block_min_constrained}
\end{equation}
For each fixed $\mathbf v_1\in\mathcal V_1$, the function
$E(\mathbf v_1,\mathbf a_2)$ is concave in $\mathbf a_2$. Therefore, it
admits at least one minimizer over $\mathcal P_2(\Gamma_2)$ at an extreme
point
\begin{equation}
\min_{\mathbf a_2\in\mathcal P_2(\Gamma_2)}
E(\mathbf v_1,\mathbf a_2)
=
\min_{\mathbf v_2\in\mathcal V_2}
E(\mathbf v_1,\mathbf v_2).
\label{eq:second_block_min_constrained}
\end{equation}
Combining \eqref{eq:first_block_min_constrained} and
\eqref{eq:second_block_min_constrained} proves
\eqref{eq:Etot_vertex_constrained}.
\end{IEEEproof}

For example, the constrained vertex pair
\[
\mathbf v_1=(1-\Gamma_1,\Gamma_1,0),
\qquad
\mathbf v_2=(1,0,0)
\]
corresponds to sensor~1 using its full budget for source~1 while sensor~2
remains silent, and gives $(q_1,q_2)=(\Gamma_1 p_{1/1},0)$.
Similarly, the vertex pair
\[
\mathbf v_1=(1-\Gamma_1,\Gamma_1,0),
\qquad
\mathbf v_2=(1-\Gamma_2,0,\Gamma_2)
\]
corresponds to sensor~1 allocating its full budget to source~1 and
sensor~2 allocating its full budget to source~2. In this case,

\begin{equation}
\begin{aligned}
q_i
&=
\Gamma_i(1-\Gamma_j)p_{i/i}
+
\Gamma_i\Gamma_j p_{i/i,j},\\
j&=1+(i\bmod 2),\quad i\in\{1,2\}.
\end{aligned}
\label{eq:q_vertex_example}
\end{equation}

The remaining seven candidates are obtained by substituting all pairs $(\mathbf v_1,\mathbf v_2)\in\mathcal V_1\times\mathcal V_2$ into \eqref{eq:q1_explicit}. Thus, under \eqref{eq:lambda_nonpositive_constrained}, the constrained optimization reduces to evaluating $E(\mathbf v_1,\mathbf v_2)$ over the nine vertex pairs in $\mathcal V_1\times\mathcal V_2$.

\begin{remark}
In this regime $\lambda_i\le0$, the per-source error functions are concave in the update probabilities, and at least one constrained vertex policy is globally
optimal. When at least one $\lambda_i$ is positive, this concavity is lost and an interior randomized policy may become optimal. This case is omitted due to space limitations.

\end{remark}

\section{Numerical Results and Discussion}
\label{sec:results}

In this section, we evaluate the optimized stationary randomized policy. We compare the optimized policy with four baselines: (i) a \emph{random policy}, where each sensor splits its sampling budget uniformly between the two sources;\\ (ii) \emph{Greedy Source~1}, where both sensors allocate their entire available budget to source~$X_1$; (iii) \emph{Greedy Source~2}, where both sensors allocate their entire available budget to source~$X_2$; and (iv) a \emph{time-division multiple access (TDMA) policy}
\footnote{In the TDMA baseline, at most one sensor is allowed to transmit in each slot. We optimize over the long-term time fractions  \(\tau_{11},\tau_{12},\tau_{21},\tau_{22}\), where \(\tau_{ki}\) denotes the  fraction of slots in which sensor \(k\) transmits source \(i\). The same per-sensor sampling budgets are used as in the MPR case, i.e.,
\(\tau_{11}+\tau_{12}\leq \Gamma_1\) and 
\(\tau_{21}+\tau_{22}\leq \Gamma_2\), with the additional TDMA  orthogonality constraint 
\(\tau_{11}+\tau_{12}+\tau_{21}+\tau_{22}\leq 1\). 
The time fractions are optimized and are not fixed to an equal split in order to achieve better performance.}, 
where the sensors are orthogonalized in time so that at most one sensor transmits in each slot. When scheduled, sensor~$i$ uses its available sampling budget to transmit source~$X_i$.
\subsection{Impact of the Sampling Budget}
\label{subsec:gamma_results}

Fig.~\ref{fig:optimization_results} shows the total reconstruction error $E$ as a function of the common sampling budget $\Gamma$. We use 
$ (\alpha_1,\beta_1)=(0.8,0.6), (\alpha_2,\beta_2)=(0.3,0.2)$, representing a highly dynamic source and a moderately dynamic source, respectively. The MPR parameters are  $p_{1/1}=0.9,p_{2/2}=0.85,
p_{1/1,2}=0.6, p_{2/2,1}=0.55$, with weights $w_1=w_2=0.5$ and $\Gamma\in[0.01,0.95]$.

\begin{figure}[t]
\centering
\includegraphics[width=0.7\linewidth]{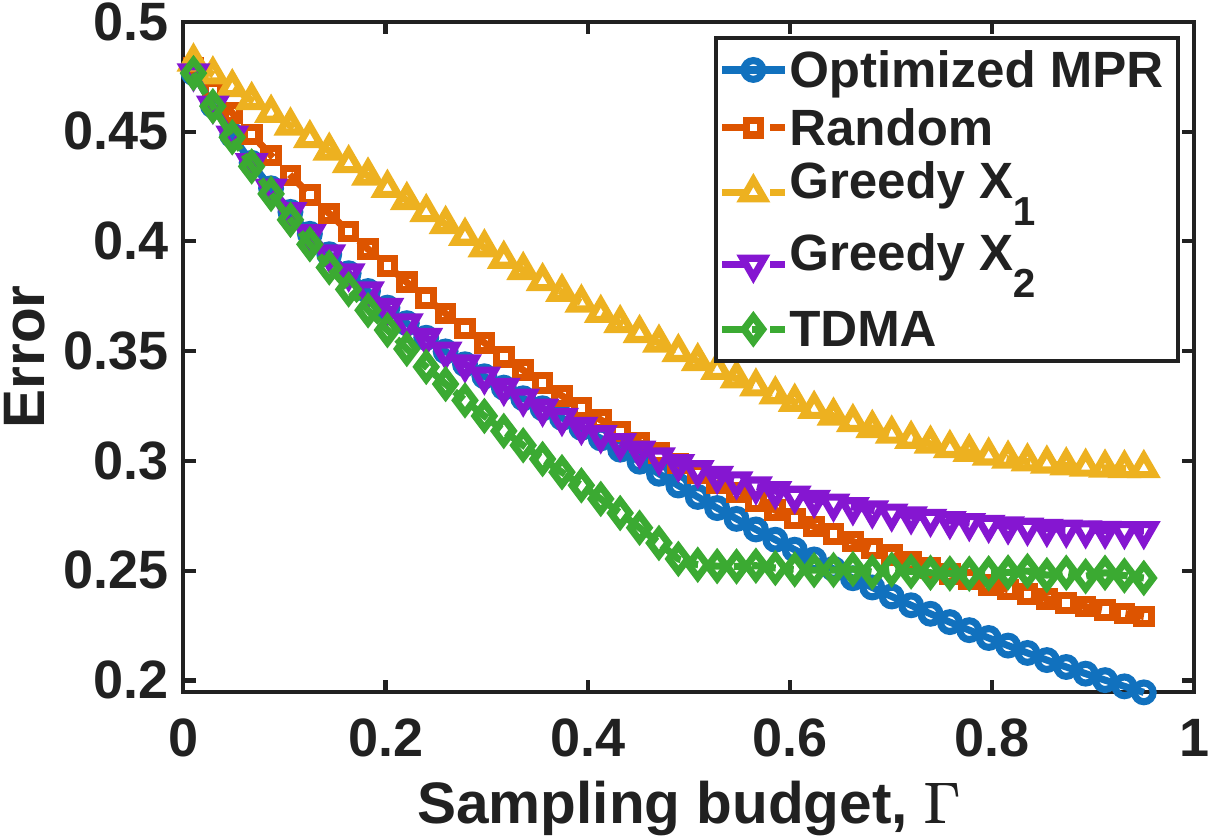}
\caption{Total reconstruction error versus sampling budget $\Gamma$.}
\label{fig:optimization_results}
\end{figure}

As expected, increasing $\Gamma$ reduces $E$ for all policies, since more
frequent sampling increases the update probabilities. The optimized policy
achieves a consistently lower reconstruction error than the random and
greedy baselines, especially at moderate and large budgets. This shows that
allocating all resources to a single source is generally suboptimal when
the sources have different dynamics and the update probabilities are
coupled through the MPR channel. Instead, the optimized policy balances the
sampling resources across sources according to their dynamics, channel
reliability, and contribution to the weighted reconstruction error.

The comparison with TDMA illustrates the role of simultaneous
transmissions. At low sampling budgets, TDMA is advantageous because scarce
transmission opportunities avoid interference. At high budgets, however,
the MPR policy becomes superior because simultaneous transmissions can be
exploited to deliver multiple updates within the same slot, rather than
restricting transmissions through strict orthogonalization.

\subsection{Effect of Semantic Weights}
\label{subsec:semantic_weight_results}

We next study the effect of the semantic weight $w_2$ on the optimized
policy, with $w_1=1-w_2$. The parameters are
$ (\alpha_1,\beta_1)=(0.8,0.1),
(\alpha_2,\beta_2)=(0.4,0.2)$, and
$p_{1/1}=0.9, p_{2/2}=0.85,
p_{1/1,2}=0.82, p_{2/2,1}=0.78$.
Both sensors satisfy $\Gamma_1=\Gamma_2=0.9$.

\begin{figure}[t]
    \centering
    \includegraphics[width=0.68\linewidth]{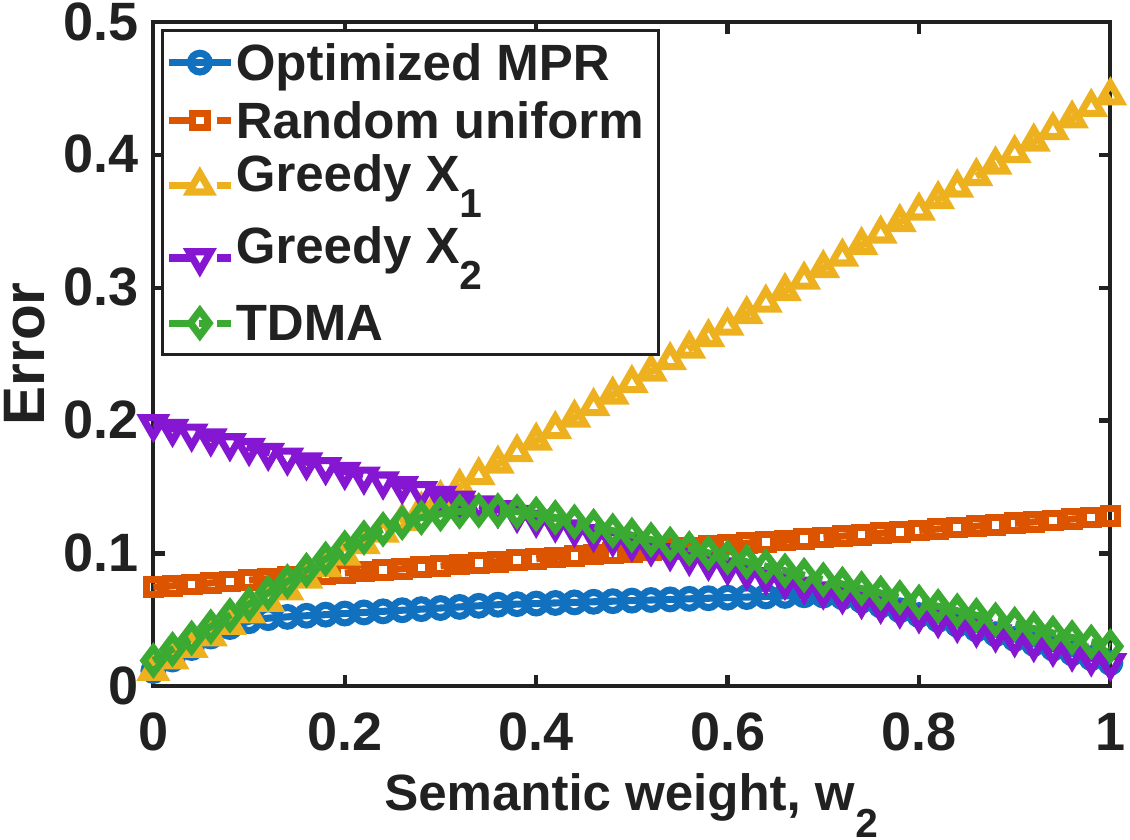}
    \caption{Total reconstruction error $E(w_2)$ versus semantic weight.}
    \label{fig:Etot_vs_w2}
\end{figure}

Fig.~\ref{fig:Etot_vs_w2} shows that the optimized policy adapts to the
semantic priority of the sources. When $w_2$ is small, the objective is
dominated by source~$1$, and the optimized performance approaches the
Greedy Source~1 baseline. As $w_2$ increases, the policy reallocates
sampling resources toward source~$2$. In the intermediate regime, the
optimized curve exhibits a mild non-monotonic behavior, reflecting the
tradeoff between source dynamics, semantic weights, and MPR coupling.
Unlike the greedy baselines, which commit the sampling budget to one
source, the optimized randomized policy balances the two sources and
achieves the lowest reconstruction error over the considered range. 

The TDMA curve lies between the two greedy baselines because it orthogonalizes transmissions while allocating time across the sources.

In this strong-MPR regime, TDMA is less efficient because it prevents simultaneous transmissions that could otherwise be decoded reliably. Hence, with high sampling budgets, MPR better exploits the channel by allowing concurrent updates, leading to lower reconstruction error.

\section{Conclusion}
\label{sec:conclusion}

This paper studied real-time reconstruction and actuation for two binary Markov sources sharing an MPR channel. We derived closed-form expressions for the RTE and CAE in terms of the source dynamics and the effective
update probabilities, and characterized how these update probabilities depend on the stationary randomized sampling policy and the MPR reception model. For binary sources, we showed that CAE minimization is equivalent to weighted RTE minimization with modified source weights. We then formulated a sampling-constrained weighted-error minimization problem and identified a source-dynamics regime in which a constrained vertex policy is globally optimal. Numerical results showed that optimized randomized policies outperform random, greedy, and time-sharing baselines, and highlighted the impact of source dynamics, channel
conditions, semantic weights, and MPR coupling on sampling-resource allocation. Future work may consider larger setups, and adaptive sampling policies.

\appendix
\section{Derivation of the Closed-Form Reconstruction Error}
\label{app:Ei_derivation}

In this appendix, we provide the algebraic steps leading to the
closed-form expression of the steady-state real-time reconstruction
error in~\eqref{eq:Ei_closed_form}.

Let
\[
\pi_{00}\triangleq \pi_i(0,0),\qquad
\pi_{01}\triangleq \pi_i(0,1),
\]
\[
\pi_{10}\triangleq \pi_i(1,0),\qquad
\pi_{11}\triangleq \pi_i(1,1).
\]
so that $\boldsymbol{\pi}_i = [\pi_{00},\pi_{01},\pi_{10},\pi_{11}]$.

From the stationary condition $\boldsymbol{\pi}_i = \boldsymbol{\pi}_i \mathbf{T}_i$

and the transition matrix in~\eqref{eq:T_matrix}, the balance
equations corresponding to the two mismatch states \((0,1)\) and
\((1,0)\) are
\begin{equation}
\pi_{01}
=
(1-q_i)\big[(1-\alpha_i)\pi_{01}+\beta_i\pi_{11}\big],
\label{eq:app_balance_01}
\end{equation}
\begin{equation}
\pi_{10}
=
(1-q_i)\big[\alpha_i\pi_{00}+(1-\beta_i)\pi_{10}\big].
\label{eq:app_balance_10}
\end{equation}

Next, note that the marginal distribution of \(X_i(t)\) must coincide
with the stationary distribution of the underlying binary Markov source.
Hence,

\begin{equation}
\pi_{00}=\frac{\beta_i}{\alpha_i+\beta_i}-\pi_{01},
\qquad
\pi_{11}=\frac{\alpha_i}{\alpha_i+\beta_i}-\pi_{10}.
\label{eq:app_pi00_pi11}
\end{equation}

Substituting~\eqref{eq:app_pi00_pi11} into~\eqref{eq:app_balance_01}
gives
\begin{equation}
\big[1-(1-q_i)(1-\alpha_i)\big]\pi_{01}
+\beta_i(1-q_i)\pi_{10}
=
\frac{\alpha_i\beta_i(1-q_i)}{\alpha_i+\beta_i}.
\label{eq:app_linear_01}
\end{equation}

Similarly, substituting~\eqref{eq:app_pi00_pi11} into
\eqref{eq:app_balance_10} yields

\begin{equation}
\alpha_i(1-q_i)\pi_{01}
+\big[1-(1-q_i)(1-\beta_i)\big]\pi_{10}
=
\frac{\alpha_i\beta_i(1-q_i)}{\alpha_i+\beta_i}.
\label{eq:app_linear_10}
\end{equation}

Subtracting~\eqref{eq:app_linear_10} from~\eqref{eq:app_linear_01},
we obtain
\begin{equation}
q_i(\pi_{01}-\pi_{10})=0,
\end{equation}
which implies
\begin{equation}
\pi_{01}=\pi_{10}.
\label{eq:app_pi01_eq_pi10}
\end{equation}

Let
\[
\zeta_i \triangleq \pi_{01}=\pi_{10}.
\]
Substituting~\eqref{eq:app_pi01_eq_pi10} into
\eqref{eq:app_linear_01} gives
\[
\Big(1-(1-q_i)(1-\alpha_i)+\beta_i(1-q_i)\Big)\zeta_i
=
\frac{\alpha_i\beta_i(1-q_i)}{\alpha_i+\beta_i}.
\]

Finally, since $E_i=\pi_{01}+\pi_{10}=2\zeta_i$, we obtain
\begin{equation}
E_i
=
\frac{2\alpha_i\beta_i(1-q_i)}
{(\alpha_i+\beta_i)\big[(\alpha_i+\beta_i)-q_i(\alpha_i+\beta_i-1)\big]}.
\end{equation}

\bibliographystyle{IEEEtran}
\bibliography{Refs}
\end{document}